\newlength{\mytopmargin}
\newlength{\myleftmargin}
\theoremstyle{plain}
\newtheorem{theorem}{Theorem}
\newtheorem{prop}[theorem]{Proposition}
\theoremstyle{definition}
\theoremstyle{remark}
\numberwithin{equation}{section}
\begin{document}

\begin{center}
{\bfseries \LARGE Co-rank 1 projections and the randomised \\[1mm] Horn problem }\\[2\baselineskip]
{\LARGE Peter J. Forrester and Jiyuan Zhang\footnote{pjforr@unimelb.edu.au; \, jiyuanz@student.unimelb.edu.au}}\\[.5\baselineskip]

{\itshape ARC Centre of Excellence for Mathematical and Statistical Frontiers,\\
School of Mathematics and Statistics, The University of Melbourne, Victoria 3010, Australia.}\\

\end{center}

\begin{abstract}
Let $\hat{\boldsymbol x}$  be a normalised standard complex Gaussian vector, and project an Hermitian matrix $A$ onto the hyperplane
orthogonal to $\hat{\boldsymbol x}$. In a recent paper Faraut [Tunisian J. Math. \textbf{1} (2019), 585--606] has observed that the 
corresponding eigenvalue PDF has an almost identical structure to the eigenvalue PDF for the rank 1 perturbation
$A + b \hat{\boldsymbol x} \hat{\boldsymbol x}^\dagger$, and asks for an explanation. We provide one by way of a common derivation
involving the secular equations and associated Jacobians. This applies too in related setting, for example when $\hat{\boldsymbol x}$
is a real Gaussian and $A$ Hermitian, and also in a multiplicative setting $A U B U^\dagger$ where $A, B$ are fixed unitary matrices
with $B$ a multiplicative rank 1 deviation from unity, and $U$ is a Haar distributed unitary matrix. Specifically, in each case
there is a dual eigenvalue
problem giving rise to  a PDF  of almost identical structure.
\end{abstract}

\section{Introduction}

Let $A$ be an $n \times n$ complex Hermitian matrix with eigenvalues $a_1 > a_2 > \cdots > a_n$.
Let $\hat{\boldsymbol x}$ denote a random $n \times 1$ vector of standard complex Gaussian entries, normalised to have unit length.
The matrix $\Pi := \mathbb I_n -  \hat{\boldsymbol x} \hat{\boldsymbol x}^\dagger$ is then a co-rank 1 projection onto the hyperplane orthogonal to
$\hat{\boldsymbol x}$. Define
\begin{equation}\label{CO}
B = \Pi A \Pi.
\end{equation}
Interpreting a result of Baryshnikov \cite{Ba01}, we know from \cite{FR02b} that the random matrix $B$ has one zero eigenvalue, and non-zero eigenvalues $\{\lambda_j\}_{j=1}^n$ supported on
\begin{equation}\label{C0}
a_1 > \lambda_1 > a_2 > \lambda_2 > \cdots > \lambda_{n-1} > a_n
\end{equation}
with probability density function (PDF)
\begin{equation}\label{C1}
\Gamma(n) {\prod_{1 \le j < k \le n - 1} ( \lambda_j - \lambda_k) \over \prod_{1 \le j < k \le n } ( a_j - a_k)}.
\end{equation}

With $A$ and $\hat{\boldsymbol x}$ as above, consider next the random matrix
\begin{equation}\label{C2b}
C = A + b  \hat{\boldsymbol x} \hat{\boldsymbol x}^\dagger
\end{equation}
with $b > 0$ a parameter. Interpreting a result of Frumkin and Goldberger \cite[Th.~6.1 and 6.7]{FG06}, restated in
\cite[Th.~5.2]{Fa18}, $C$ has eigenvalues, $\{\mu_i\}_{i=1}^n$ say, supported on
\begin{equation}\label{C3}
\mu_1 > a_1 > \mu_2 > \cdots > \mu_n > a_n
\end{equation}
subject to the constraint
\begin{equation}\label{C4}
\sum_{i=1}^n \mu_i = \sum_{i=1}^n a_i + b,
\end{equation}
and with PDF\footnote{The normalisation constant $\Gamma(n)$ is given as ${1 \over n}$ in \cite{FG06} and repeated in \cite{Fa18}.
This is due to a different convention relating to the implementation of the delta function constraint on the Lebesgue measure in
$\mathbb R^n$ which differs by a factor of $n!$; see the discussion in the second last paragraph of \S1 of
\cite{Fa18}.}
\begin{equation}\label{C5}
\Gamma(n) {1 \over b^{n-1}}
{\prod_{1 \le j < k \le n} (\mu_j - \mu_k) \over \prod_{1 \le j < k \le n} (a_j - a_k) }.
\end{equation}
Observing the similarity between (\ref{C0}), (\ref{C1}) and (\ref{C3}), (\ref{C5}), the recent paper of Faraut \cite{Fa18} in the concluding Section ``Remarks", asks for an
explanation. Here we address this question, showing in Sections \ref{S2.1} and \ref{S2.2} how to give a unified derivation of both results.

The viewpoint of (\ref{C2b}) taken in \cite{FR06,Fa18} is that of a special case of the random matrix sum
\begin{equation}\label{3.1}
U A U^\dagger + V B V^\dagger,
\end{equation}
for $U,V \in U(n)$, chosen with Haar measure. The matrices $A$ and $B$ are fixed Hermitian matrices, and the special case being
considered is $B = {\rm diag} \, (b,0,\dots,0)$. One remarks that $U A U^\dagger$ is the adjoint orbit of the matrix $A$, and similarly the
meaning of $V B V^\dagger$. Also, since matrices in $U(n)$ diagonalise complex Hermitian matrices, the sum in (\ref{3.1}) depends only on
the eigenvalues of $A$ and $B$. Due to this, the question of the eigenvalue PDF of (\ref{3.1}) is a randomised version of Horn's problem
\cite{Ho62}. This randomised version appears to have been first studied in 
\cite{RW90}, in the variant and specialisation of (\ref{3.1}) for which $A,B$ are real symmetric matrices and $U,V \in {\rm SO}(3)$ (see also Section \ref{S2.4} below).
Lie algebraic structures including and generalising (\ref{3.1}) can be found in 
\cite{DRW93}. Recent years has seen a surge of interest in this problem; see e.g.~\cite{Zu18,Fa18,CZ18,ZX17,ZJW18,CMZ19}.
The currentness of this activity provides further motivation for extending our study beyond the question posed in \cite{Fa18}.

The result (\ref{C1}) assumes the eigenvalues of $A$ are all distinct. In the case of the random matrix (\ref{C0}) it is known from
\cite{FR02b} how to extend (\ref{C1}) to the case that $A$ has repeated eigenvalues. We will show in Section \ref{S2.3} how to 
use the methods \cite{FR02b} to calculate the eigenvalue PDF of (\ref{C2b}) in this setting. In Section \ref{S2.4} the case of
(\ref{3.1}) with $A,B$ real symmetric and $U,V \in O(n)$ is considered in the case $B$ having rank 1. In the case $n=3$ this relates to the work
\cite{RW90}. The topic of Section \ref{S2.5} is a randomised multiplicative form of Horn's problem, involving unitary matrices.

When both $A$ and $B$ in (\ref{3.1}) have full rank, Zuber \cite{Zu18} has recently given a multiple integral formula for the corresponding
eigenvalue PDF. This is based on a particular integral over the unitary group due to Harish-Chandra \cite{HC57}, and to 
Itzykson and Zuber \cite{IZ80}, to be referred to as the HCIZ integral. In the case of (\ref{CO}) it is known how to use the latter do derive (\ref{C1}). In Section
\ref{S5} we show how to use the result of \cite{Zu18} to reclaim (\ref{C5}). Following \cite{ZX17,Fa18}, we also draw attention to the relevance of this
integral to the computation of the diagonal entries of the random matrix $U_p A U_p^\dagger$, where $U_p$ is the $p \times n$ matrix
formed from the first $p$ rows $(p \le n)$ of $U \in U(n)$, chosen with Haar measure.

We conclude in Section \ref{S6} with some remarks relating to the analogue of (\ref{3.1}) when $A, B$ are real anti-symmetric and
$U$ real orthogonal.

\section{A unified derivation of (\ref{C1}) and (\ref{C5})}\label{S2}
\subsection{Derivation of (\ref{C1})}\label{S2.1}
We begin by recalling the derivation of the PDF (\ref{C1}) due essentially  to Baryshnikov \cite{Ba01}; see also \cite{FR02b} and
\cite[\S 4.2]{forrester10}. A fundamental point is that the distribution of the random matrix $ \hat{\boldsymbol x} \hat{\boldsymbol x}^\dagger$
in the definition of $\Pi$ in (\ref{CO}) is unchanged by
multiplication on the left or on the right by a unitary matrix. This means that the eigenvalue distribution of $\Pi A \Pi$ is the same
as that when $A$ is replaced by the diagonal matrix of its eigenvalues, which we henceforth assume.

Next, with $B$ as specified in (\ref{CO}), the fact that $\Pi$ is a projector can be used to check that $B$ and
$A \Pi$ have the same eigenvalues. This can be seen from a manipulation of the characteristic polynomial, using (\ref{6.2}) below
with $p=q=n$, $C = - \Pi$, $D = A \Pi$. As a consequence
\begin{align}\label{6.1}
\det (\lambda \mathbb I_n - B ) & = \det (\lambda \mathbb I_n - A \Pi) \: = \det (\lambda \mathbb I_n - A (\mathbb I_n -   \hat{\boldsymbol x} \hat{\boldsymbol x}^\dagger ))  \nonumber \\
& = \det (\lambda \mathbb I_n - A) \det (  \mathbb I_n + (\lambda  \mathbb I_n - A)^{-1} A  \hat{\boldsymbol x} \hat{\boldsymbol x}^\dagger)  \nonumber   \\
& =  \det (\lambda \mathbb I_n - A) \Big ( 1 + \hat{\boldsymbol x}^\dagger ( \lambda  \mathbb I_n  - A)^{-1} A \hat{\boldsymbol x}  \Big ).
\end{align}
To obtain the final equality, the well known formula (see e.g.~\cite[Exercises 5.2 q.2]{forrester10})
\begin{equation}\label{6.2}
\det ( \mathbb I_p + C_{p \times q} D_{q \times p}) =  \det ( \mathbb I_q + D_{q \times p} C_{p \times q}  )
\end{equation}
has been used.  The condition for an eigenvalue $\lambda$ of $\Pi A \Pi$ is thus
\begin{align}\label{6.2a}
0 & = 1  +    \hat{\boldsymbol x}^\dagger ( \lambda \mathbb I - A)^{-1} A \hat{ \boldsymbol x} \nonumber \\
&=   1 -  \hat{\boldsymbol x}^\dagger   \hat{ \boldsymbol x} + \lambda  \hat{\boldsymbol x}^\dagger ( \lambda \mathbb I - A)^{-1}   \hat{ \boldsymbol x}  \nonumber \\ 
& =  \lambda  \hat{\boldsymbol x}^\dagger ( \lambda \mathbb I - A)^{-1}   \hat{ \boldsymbol x}. 
\end{align}

We read off from (\ref{6.2a}) that one eigenvalue is always equal to 0, in keeping with $\Pi$ being of co-rank 1, and that the remaining
eigenvalues are the zeros of the random rational function
\begin{equation}\label{R1}
\sum_{p=1}^n {w_p \over \lambda - a_p}, \qquad w_p := | x_p|^2.
\end{equation}
Here the $x_p$ are the components of $\hat{\boldsymbol x}$. The latter being a normalised standard complex Gaussian vector tells us that
$(|x_1|^2, |x_2|^2,\dots, |x_n|^2)$ is uniformly distributed on the simplex $\sum_{p=1}^n | x_p|^2 = 1$, or equivalently that this latter
vector has Dirichlet distribution, as specified by the  PDF
\begin{equation}\label{D}
{\Gamma(s_1 + \cdots + s_n) \over \Gamma(s_1) \cdots \Gamma(s_n)} \prod_{j=1}^n w_j^{s_j - 1}
\end{equation}
with $w_1,\dots, w_n > 0$ and $\sum_{j=1}^n w_j = 1$, in the case $s_j = 1$ ($j=1,\dots,n$).

Denote the zeros of (\ref{R1}) and thus the non-zero eigenvalues of $\Pi A \Pi$ by $\{ \lambda_j \}_{j=1}^{n-1}$.
Consideration of the graph of (\ref{R1}) establishes the interlacing (\ref{C1}). Also, we can make use of the zeros of
(\ref{R1}) to write
\begin{equation}\label{8.0}
\sum_{p=1}^n {w_p \over \lambda - a_p} = {\prod_{l=1}^{n-1} ( \lambda - \lambda_l) \over
\prod_{l=1}^n ( \lambda - a_l)}.
\end{equation}
Computing the residue at $\lambda = a_j$ gives
\begin{equation}\label{8.1}
w_j =  {\prod_{l=1}^{n-1} ( a_j - \lambda_l) \over
\prod_{l=1, l \ne j}^n ( a_j - a_l)}.
\end{equation}
The measure associated with the distribution of $\{ w_p \}_{p=1}^n$ is read off from the special case 
$s_p=1$ of (\ref{D}), and is thus equal to
\begin{equation}\label{8.1a}
\Gamma(n) dw_1 \cdots dw_{n-1}, 
\end{equation}
subject to the constraints $0 < w_j< 1$ ($j=1,\dots,n-1$) and $\sum_{j=1}^{n-1} w_j < 1$.
 We want to change variables to $\{ \lambda_j \}_{j=1}^{n-1}$. It follows from (\ref{8.1}) by computing
 appropriate partial derivatives to form the Jacobian matrix that
\begin{equation}\label{8.2}
\Gamma(n) dw_1 \cdots dw_{n-1} = \Gamma(n) \Big ( \prod_{j=1}^{n-1} w_j \Big )
\bigg | \det \Big [ {1 \over a_j - \lambda_l } \Big ]_{j,l=1}^{n-1} \bigg | \, d\lambda_1 \cdots d\lambda_{n-1}. 
\end{equation} 

The determinant in (\ref{8.2}) is referred to as the Cauchy double alternant, and has the well known
evaluation (see e.g.~\cite[Eq.~(4.14)]{forrester10})
\begin{equation}\label{9.1}
\det \Big [ {1 \over a_j - \lambda_l} \Big ]_{j,l=1}^{n-1} =
{\prod_{1 \le j < k \le n - 1} (a_j - a_k)(\lambda_j - \lambda_k) \over
\prod_{j,k=1}^{n-1} (a_j - \lambda_k) }.
\end{equation}
Substituting (\ref{9.1}) in (\ref{8.2}) gives (\ref{C1}).

\subsection{Derivation of (\ref{C5})}\label{S2.2}
As for (\ref{C0}), the invariance of 
 $\hat{\boldsymbol x} \hat{\boldsymbol x}^\dagger$ under conjugation by unitary matrices implies that the
 eigenvalue problem for $C$ in (\ref{C2b}) is the same as that when $A$ therein is replaced by its diagonal
 matrix of eigenvalues. We assume this form of $A$.
 
 For the characteristic polynomial of $C$ we have
 \begin{align}\label{9.1a}
\det (\lambda \mathbb I_n - C ) & = \det (\lambda \mathbb I_n - A -b \hat{\boldsymbol x} \hat{\boldsymbol x}^\dagger )   \nonumber   \\
& = \det (\lambda \mathbb I_n - A) \det (  \mathbb I_n - b (\lambda  \mathbb I_n - A)^{-1}   \hat{\boldsymbol x} \hat{\boldsymbol x}^\dagger)  \nonumber   \\
& =  \det (\lambda \mathbb I_n - A) \Big ( 1 - b  \hat{\boldsymbol x}^\dagger ( \lambda  \mathbb I_n  - A)^{-1}  \hat{\boldsymbol x}  \Big ),
\end{align}
 where to obtain the final line use has been made of (\ref{6.2}). It follows that the eigenvalues of $C$, $\{ \lambda_j \}_{j=1}^n$ say, are the
 zeros of the random rational function
\begin{equation}\label{9.2} 
1 - b \sum_{l=1}^n { w_l \over \lambda - a_l} = 0, \qquad w_l := |x_l|^2.
\end{equation}
As in (\ref{R1}) the variables $\{ w_j \}_{j=1}^{n-1}$ have distribution (\ref{D}) with parameters $s_l=1$ ($l=1,\dots,n$).

Consideration of the graph of the LHS of (\ref{9.2}) implies, under the assumption $b > 0$, that the interlacing condition (\ref{C3})
holds with $\{ \mu_j \}_{j=1}^n$ relabelled $\{ \lambda_j \}_{j=1}^n$.
Next, analogous to (\ref{8.0}), by regarding (\ref{9.2}) as a partial fraction expansion involving $\{ \lambda_l \}$ we have
\begin{equation}\label{10.0} 
1 - b \sum_{l=1}^n {w_l \over \lambda - a_l} =
{\prod_{l=1}^n ( \lambda - \lambda_l) \over \prod_{l=1}^n ( \lambda - a_l) }.
\end{equation}
Note that equating the coefficient of $1/\lambda$ on both sides of this expression gives the constraint
(\ref{C4}), telling us in particular that only $\{\lambda_l\}_{l=1}^{n-1}$ are independent.

Computing the residue at $\lambda = a_j$ gives
\begin{equation}\label{10.1} 
- b w_j = {\prod_{l=1}^n (a_j - \lambda_l) \over
\prod_{l=1, l \ne j} (a_j - a_l)}.
\end{equation}
Using this with $\lambda_n$ replaced by $b + a_n + \sum_{j=1}^{n-1} (a_j - \lambda_j)$ in keeping with (\ref{C4})
allows us to compute the appropriate partial derivatives to form the Jacobian matrix for the
change of variables from $\{ w_j\}_{j=1}^{n-1}$ to $\{ \lambda_j \}_{j=1}^{n-1}$ and so obtain
\begin{equation}\label{10.2}
\Gamma(n) dw_1 \cdots dw_{n-1} = \Gamma(n) \Big ( \prod_{j=1}^{n-1} w_j \Big )
\bigg | \det \Big [ {1 \over a_j - \lambda_l} - {1 \over a_j - \lambda_n}  \Big ]_{j,l=1}^{n-1} \bigg | \, d\lambda_1 \cdots d\lambda_{n-1}. 
\end{equation} 
Noting that
\begin{equation}\label{10.3}
 \det \Big [ {1 \over a_j - \lambda_l} - {1 \over a_j - \lambda_n}  \Big ]_{j,l=1}^{n-1} = \prod_{j=1}^{n-1} {( \lambda_j - \lambda_n) \over
( a_j - \lambda_n )} 
 \det \Big [ {1 \over a_j - \lambda_l } \Big ]_{j,l=1}^{n-1} 
\end{equation}
and making use of the Cauchy double alternant determinant evaluation (\ref{9.1}), then substituting the result in (\ref{10.2}) gives
(\ref{C5}).   

We can readily
integrate over $\{x_j\}_{j=1}^{n-1}$ for $n=2$ and $n=3$ and so check that the given
normalisation is consistent with our conventions. In the case $n=2$, (\ref{C5}) with the substitution (\ref{10.2}) reads
\begin{equation}\label{F1}
{1 \over b} {2x_1 - (a_1 + a_2 + b) \over a_1 - a_2}
\end{equation}
while (\ref{C3}) and (\ref{C4}) together imply $a_1 + b > x_1 > {\rm max} \, (a_1, a_2 + b)$. There are thus two distinct cases:
$0 < b < a_1 - a_2$ and $b > a_1 - a_2$. Both integrate to give the value unity, in agreement with the normalisation $\Gamma(n)$.
In the case $n=3$ we specialise to the choice $(a_1,a_2,a_3) = (a,0,-a)$.
The constraints  (\ref{C3}) and (\ref{C4})  then imply $0< x_2 < a$ and $a < x_1 < a + b - x_2$. It is efficient to now use
computer algebra to integrate (\ref{F1}) over these regions, with the value unity resulting, and again confirming
the normalisation as stated in  (\ref{C5}).

\section{Generalisations}
\subsection{Degenerate eigenvalues}\label{S2.3}
Suppose the matrix $A$ in (\ref{C0}) and (\ref{C2b}) is of size $N = \sum_{l=1}^n m_l$ where
$m_l$ is the multiplicity of the eigenvalue $\lambda_l$. In the previous sections it was assumed that $m_l = 1$ $(l=1,\dots,n)$.
With $\Pi$ defined as in (\ref{C0}) but now of size $N \times N$, we know from \cite{FR02b} that the matrix (\ref{C0}) has
one zero eigenvalue, $m_l - 1$ eigenvalues equal to $a_l$ $(l=1,\dots,n)$ and eigenvalues $\{\lambda_l\}_{l=1}^{n-1}$
supported on (\ref{C1}) with PDF
\begin{equation}\label{D1}
{\Gamma(m_1 + \cdots + m_n) \over \Gamma(m_1) \cdots \Gamma(m_n) }
{\prod_{1 \le j < k \le n - 1} (\lambda_j - \lambda_k) \over
\prod_{1 \le j < k \le n} (a_j - a_k)^{m_j + m_k - 1}}
\prod_{j=1}^{n-1} \prod_{p=1}^n | \lambda_j - a_p|^{m_p - 1}.
\end{equation}

It is of interest to compute the eigenvalue PDF of (\ref{C2b}) in this setting, and so to extend (\ref{C3})--(\ref{C5}).
As in the case $m_l=1$, a minor modification of the working used in \cite{FR02b} to derive (\ref{D1}) suffices.
This in turn implies that only a minor modification of the working of Section \ref{S2.2} is required.

With $n$ in (\ref{9.1a}) replaced by $N$ this equation again holds true in the setting of
degenerate eigenvalues. This means that (\ref{9.2}) is again valid, but now with
\begin{equation}\label{wx}
w_l = \sum_{s=1}^{m_l} | x_l^{(s)}|^2,
\end{equation}
where $ x_l^{(s)}$ denotes the components of the vector $\hat{\boldsymbol x}$ in the same row as the
eigenvalue $\lambda_l$ (multiplicity $m_l$) in the matrix of eigenvalues. Since
$\hat{\boldsymbol x}$ is a vector of independent standard complex Gaussian entries normalised to have
length unity, the variables (\ref{wx}) have Dirichlet distribution (\ref{D}) with $s_j = m_j$ ($j=1,\dots,n)$.

The working of (\ref{10.0})--(\ref{10.3}) is independent of the precise values of $\{m_j\}$ and so again
applies. However, relative to the case $m_j = 1$ ($j=1,\dots,n$) there is now a contribution to the PDF 
obtained by substituting (\ref{10.1}) in (\ref{D}), giving a final expression very similar to (\ref{D1}).

\begin{theorem}\label{T1}
Let $A$ be a fixed diagonal matrix with diagonal entries $a_l$ $(a_1 > a_2 > \cdots > a_n)$ each repeated $m_l$
times, and define $N = \sum_{l=1}^n m_l$. Let $\hat{\boldsymbol x}$ be a $N \times 1$ vector of independent standard
complex Gaussians and consider the rank 1 perturbed matrix $C = A + b \hat{\boldsymbol x} \hat{\boldsymbol x}^\dagger$.
This matrix has eigenvalues $a_l$ with multiplicity $m_l - 1$, and remaining eigenvalues $\{ \lambda_l \}_{l=1}^n$ say
supported on (\ref{C3}) and subject to the constraint (\ref{C4}) with the eigenvalue PDF
\begin{equation}\label{D1b}
 {\Gamma(m_1 + \cdots + m_n) \over \Gamma(m_n) \cdots \Gamma(m_n) }{1 \over b^{N-1}}
{\prod_{1 \le j < k \le n } (\lambda_j - \lambda_k) \over
\prod_{1 \le j < k \le n} (a_j - a_k)^{m_j + m_k - 1}}
\prod_{j=1}^{n} \prod_{p=1}^n | \lambda_j - a_p|^{m_p - 1}
\end{equation}
(cf.~(\ref{D1}) and its support (\ref{C1})).
\end{theorem}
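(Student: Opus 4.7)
The plan is to adapt the derivation of Section~\ref{S2.2} with multiplicities carefully tracked; as the paragraph preceding the theorem signals, the residual eigenvalues at each $a_l$ will emerge automatically from the rank-one nature of the perturbation, while the genuinely new input is the Dirichlet weight $\prod_j w_j^{m_j-1}$. First I would redo the characteristic polynomial identity (\ref{9.1a}) for the $N\times N$ matrix $C$; the formula (\ref{6.2}) is dimension-agnostic, so
\begin{equation*}
\det(\lambda\mathbb I_N-C)=\det(\lambda\mathbb I_N-A)\bigl(1-b\,\hat{\boldsymbol x}^\dagger(\lambda\mathbb I_N-A)^{-1}\hat{\boldsymbol x}\bigr).
\end{equation*}
Partitioning $\hat{\boldsymbol x}$ into blocks of size $m_l$ corresponding to the eigenspace for $a_l$ and introducing $w_l$ as in (\ref{wx}) reduces the inner product to $\sum_{l=1}^n w_l/(\lambda-a_l)$. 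Factoring $\prod_l(\lambda-a_l)^{m_l-1}$ out of both terms then expresses the characteristic polynomial as $\prod_l(\lambda-a_l)^{m_l-1}\,P(\lambda)$, where $P$ is a degree-$n$ polynomial whose zeros are the roots of the secular equation (\ref{9.2}); this simultaneously accounts for the $m_l-1$-fold eigenvalue at each $a_l$ and reduces the PDF computation to the $n$ non-trivial eigenvalues $\{\lambda_l\}_{l=1}^n$. Interlacing (\ref{C3}) and the trace relation (\ref{C4}) follow by the same arguments as in Section~\ref{S2.2}.

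Second, since $\hat{\boldsymbol x}$ is a unit-normalised vector of $N$ i.i.d.\ standard complex Gaussians, the block sums $(w_1,\dots,w_n)$ are Dirichlet distributed with parameters $s_l=m_l$ in (\ref{D}), contributing the extra weight $\prod_{j=1}^n w_j^{m_j-1}$ to the measure. The change of variables from $(w_1,\dots,w_{n-1})$ to $(\lambda_1,\dots,\lambda_{n-1})$ — with $\lambda_n$ eliminated via (\ref{C4}) — is identical to that in Section~\ref{S2.2}, so (\ref{10.1})--(\ref{10.3}) and the Cauchy double alternant (\ref{9.1}) together supply the Vandermonde factor $\prod_{1\le j<k\le n}(\lambda_j-\lambda_k)$ and a denominator $\prod_{j=1}^{n-1}\prod_{l=1}^n(a_j-\lambda_l)$, exactly as in the non-degenerate derivation.

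Finally I would substitute (\ref{10.1}) into the combined $w$-factor $\bigl(\prod_{j=1}^{n-1}w_j\bigr)\prod_{j=1}^n w_j^{m_j-1}=w_n^{m_n-1}\prod_{j=1}^{n-1}w_j^{m_j}$. The resulting powers of $|a_j-\lambda_l|$ in the numerator, after cancellation against the $\prod_{j=1}^{n-1}\prod_{l=1}^n|a_j-\lambda_l|$ from the Jacobian, collapse to the claimed $\prod_{j=1}^n\prod_{p=1}^n|\lambda_j-a_p|^{m_p-1}$; the total power of $b$ becomes $b^{-(N-1)}$, matching (\ref{D1b}). The main bookkeeping obstacle concerns the powers of $|a_j-a_k|$ in the denominator: the $w$ substitution contributes exponent $m_j+m_k$ for pairs $\{j,k\}$ with $j<k\le n-1$ and exponent $m_j+m_n-1$ for pairs with $k=n$, while the Cauchy determinant supplies a single positive power of $|a_j-a_k|$ only for $j<k\le n-1$. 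Reconciling these asymmetric contributions to produce the symmetric exponent $m_j+m_k-1$ uniformly over $1\le j<k\le n$ is the step most prone to indexing slips, and I would organise it by treating the index $n$ on the same footing as the others once the Cauchy determinant step is complete.
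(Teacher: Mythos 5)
Your proposal is correct and follows essentially the same route as the paper: replace $n$ by $N$ in (\ref{9.1a}), group the Gaussian weights into block sums $w_l$ as in (\ref{wx}) so that the Dirichlet parameters become $s_l=m_l$, reuse (\ref{10.0})--(\ref{10.3}) unchanged, and then absorb the extra Dirichlet weight $\prod_j w_j^{m_j-1}$ via the residue formula (\ref{10.1}). The only difference is that you spell out the $|a_j-a_k|$ and $|a_j-\lambda_l|$ exponent bookkeeping (including the asymmetric treatment of the eliminated index $n$) that the paper leaves implicit with the phrase ``giving a final expression very similar to (\ref{D1})'', and your accounting is correct.
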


\subsection{Adjoint orbits involving real orthogonal matrices}\label{S2.4}
Consider the variant of (\ref{CO}) in which $A$ is an $n \times n$ real symmetric matrix with eigenvalues $a_1 > a_2 > \cdots > a_n$
and $\Pi = \mathbb I_n - \hat{\mathbf x}  \hat{\mathbf x}^T$, with $\hat{\mathbf x}$ a random $n \times 1$ vector of standard real Gaussian
entries, normalised to have unit length. We know from \cite[Cor.~1 with $\beta = 1$, $m_i=1$, $(i=1,\dots,n)$]{FR02b} that the eigenvalue PDF of the $n-1$ non-zero eigenvalues
$\{ \lambda_j \}$ is given by (\ref{D1}) with
$m_l = 1/2$, $l=1,\dots, n$, and is thus equal to 
\begin{equation}\label{D1A}
{\Gamma(n/2) \over \pi^{n/2} } 
{\prod_{1 \le j < k \le n-1 } (\lambda_j - \lambda_k) \over
\prod_{j=1}^{n-1} \prod_{p=1}^n | \lambda_j - a_p|^{1/2}},
\end{equation}
with support given by (\ref{C1}).

The analogous variant of (\ref{3.1}) is to choose the matrices $A, B$ as real symmetric, and $U,V \in O(n)$.
Suppose furthermore that $B = {\rm diag} \, (b,0,\dots, 0)$.
An essential point, already used in the derivation of (\ref{D1A}) as given in \cite{FR02b}, is that the joint distribution
of the components an $n \times 1$ vector of standard real Gaussian
entries is given by the Dirichlet distribution (\ref{D}) with $s_j = 1/2$ ($j=1,\dots,n)$.
Consideration of the working needed to derive (\ref{D1b}) then implies that the eigenvalue PDF is given by (\ref{D1b})
specialised to $m_l = 1/2$, $l=1,\dots, n$, and $N = n/2$. Explicitly, the eigenvalue PDF equals
\begin{equation}\label{D1B}
{\Gamma(n/2) \over \pi^{n/2} } {1 \over b^{n/2 - 1}}
{\prod_{1 \le j < k \le n } (\lambda_j - \lambda_k) \over
\prod_{j=1}^{n} \prod_{p=1}^n | \lambda_j - a_p|^{1/2}},
\end{equation}
supported on (\ref{C3}) and subject to the constraint (\ref{C4}).

The first meaningful case of (\ref{D1B}) is when $n=2$. Introducing the variable $s := \lambda_1 - \lambda_2$,
and the constants $s_{\rm max}:= a_1 - a_2 + b$, $s_{\rm min} := a_2 - a_1 + b$, a simple
calculation gives that (\ref{D1A}) can then be reduced to the density for $s$,
\begin{equation}\label{D1C}
{2 \over \pi} {s \over \sqrt{(s^2 - s_{\rm min}^2) (s_{\rm max}^2 - s^2)}}, \quad s_{\rm min} < s < s_{\rm max}.
\end{equation}
This is a special case ($\beta_2 = 0$) of the density given 
in \cite[Eq.~(36)]{Zu18} for the setting under consideration but now
with $B$  full rank, $B = {\rm diag} \, (b, \beta_2)$.

The case $n=3$ and $b=1$ was first considered in \cite{RW90}. Noting the parametrisation
 \cite[Eq.~(3.1)]{RW90}, it appears that the computed density  \cite[Eq.~(4.2)]{RW90} agrees
 with our (\ref{D1B}), except that the numerator is abscent. This would seem to be a misprint,
 as the specialisation $a_2 = a_3 = 0$ given in  \cite[Eq.~(6.3)]{RW90} contains the
 denominator as is consistent with (\ref{D1B}).

\subsection{A multiplicative randomised Horn's problem}\label{S2.5}
Let $U,V \in U(N)$ be chosen with Haar measure, and let $A,B$ be fixed unitary matrices.
Asking for the eigenvalues of the product matrix $UAU^\dagger VB V^\dagger$ is a randomised form of a multiplicative variant of Horn's
problem (for information and references relating to this
multiplicative Horn's problem without randomisation, see \cite[Sec.~12]{Bh01}).
The facts that unitary matrices are diagonalised by conjugation by other unitary matrices, and that the Haar
measure is invariant under multiplication by fixed unitary matrices, tell us that the eigenvalue PDF of $UAU^\dagger VB V^\dagger$
depends only on the eigenvalues of $A$ and $B$. In the case that $B$ is of the form 
${\rm diag} \, (t,1,\dots,1)$ with $|t|=1$, it is possible to adapt workings already in the literature
\cite{FR06} \cite[Exercises 4.2 q.3]{forrester10} to deduce the eigenvalue PDF (cf.~(\ref{D1b}) and its support).

\begin{prop}\label{P2}
Let $A$ be a fixed $N \times N$ diagonal unitary matrix, with diagonal entries $e^{i \theta_l}$
($0  \le \theta_1 < \theta_2 < \cdots < \theta_n < 2 \pi$), each repeated $m_l$ times so that $N = \sum_{l=1}^n m_l$.
Let $t = e^{i \phi}$ and set $B = {\rm diag} \, (t,1,\dots,1)$. The random unitary product matrix $UAU^\dagger VB V^\dagger$, where $U,V \in U(N)$ are chosen
with Haar measure, has eigenvalues $e^{i \theta_l}$ of multiplicity $m_l - 1$ $(l=1,\dots,n)$. The remaining $n$ eigenvalues,
$\{ e^{i \psi_j} \}_{j=1}^n$ say, are supported on
\begin{equation}\label{supp}
\theta_{i-1} < \psi_i < \theta_i \qquad (i=1,\dots,n; \quad \theta_0 := \theta_n {\rm mod} \, 2 \pi)
\end{equation}
and subject to the constraint
\begin{equation}\label{supp1}
\prod_{l=1}^n e^{i \psi_l} = t \prod_{l=1}^n e^{i \theta_l}.
\end{equation}
They have eigenvalue PDF
\begin{multline}\label{supp2}
{\Gamma(m_1 + \cdots + m_n) \over
\Gamma(m_1) \cdots \Gamma(m_n)}
{1 \over |1 - t|^{N - 1}} \\
\times {\prod_{1 \le j < k \le n} | e^{i \psi_k} - e^{i \psi_j} | \over
\prod_{1 \le j < k \le n}  | e^{i \theta_k} - e^{i \theta_j} |^{m_j + m_k - 1}}
\prod_{j=1}^n \prod_{p=1}^n | e^{i \theta_j} - e^{i \psi_p}|^{m_p - 1}.
\end{multline}
\end{prop}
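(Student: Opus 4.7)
\medskip

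\noindent
\textbf{Proof proposal for Proposition \ref{P2}.}
The plan is to follow the derivation of Section \ref{S2.2}, adapted to the degenerate, multiplicative, unitary setting in the spirit of Section \ref{S2.3}. First I would reduce the two-sided conjugation to a one-sided one by Haar invariance: since $U,V\in U(N)$ are independent Haar, conjugating the whole product by $U^\dagger$ shows that the eigenvalues of $UAU^\dagger VBV^\dagger$ are distributed as those of $A\,(U^\dagger V)B(U^\dagger V)^\dagger$, and $U^\dagger V$ is again Haar. Writing $B=\mathbb I_N+(t-1)e_1 e_1^\dagger$ we then have $VBV^\dagger=\mathbb I_N+(t-1)\hat{\boldsymbol x}\hat{\boldsymbol x}^\dagger$, where $\hat{\boldsymbol x}=Ve_1$ is a normalised standard complex Gaussian vector on $\mathbb C^N$.

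Next I would derive the secular equation exactly as in (\ref{9.1a}). Expanding
\begin{equation*}
\det(\lambda\mathbb I_N-A-(t-1)A\hat{\boldsymbol x}\hat{\boldsymbol x}^\dagger)
= \det(\lambda\mathbb I_N-A)\Bigl(1-(t-1)\hat{\boldsymbol x}^\dagger(\lambda\mathbb I_N-A)^{-1}A\hat{\boldsymbol x}\Bigr)
\end{equation*}
via (\ref{6.2}), the non-trivial eigenvalues $\nu_j:=e^{i\psi_j}$ solve
\begin{equation*}
1=(t-1)\sum_{p=1}^n \frac{e^{i\theta_p}w_p}{\lambda-e^{i\theta_p}},\qquad w_p:=\sum_{s=1}^{m_p}|x_p^{(s)}|^2,
\end{equation*}
where the grouping by repeated eigenvalues gives $m_p-1$ trivial eigenvalues equal to $e^{i\theta_p}$, and the $w_p$ inherit from $\hat{\boldsymbol x}$ the Dirichlet distribution (\ref{D}) with parameters $s_p=m_p$. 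Treating the LHS as a rational function of degree $n$ in $\lambda$ with poles at $e^{i\theta_p}$ and leading coefficient $1$ at infinity, one obtains the partial-fraction identity
\begin{equation*}
1-(t-1)\sum_{p=1}^n \frac{e^{i\theta_p}w_p}{\lambda-e^{i\theta_p}}=\frac{\prod_{p=1}^n(\lambda-e^{i\psi_p})}{\prod_{p=1}^n(\lambda-e^{i\theta_p})}.
\end{equation*}
Evaluating at $\lambda=0$ produces the constraint (\ref{supp1}), and computing the residue at $\lambda=e^{i\theta_j}$ yields the analogue of (\ref{10.1}),
\begin{equation*}
-(t-1)e^{i\theta_j}w_j=\frac{\prod_p(e^{i\theta_j}-e^{i\psi_p})}{\prod_{p\ne j}(e^{i\theta_j}-e^{i\theta_p})}.
\end{equation*}
The interlacing (\ref{supp}) then follows from tracking the sign changes of the (real-valued) function $\lambda\mapsto 1-(t-1)\sum_p e^{i\theta_p}w_p/(\lambda-e^{i\theta_p})$ along the unit circle, paralleling the graphical argument used to establish (\ref{C3}).

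The remaining step is the Jacobian of the change of variables $(w_1,\dots,w_{n-1})\mapsto(\psi_1,\dots,\psi_{n-1})$, eliminating $\psi_n$ through (\ref{supp1}) so that $\partial e^{i\psi_n}/\partial\psi_k=-ie^{i\psi_n}$. Differentiating the residue formula gives entries proportional to
\begin{equation*}
\frac{\partial w_j}{\partial\psi_k}=-i w_j\Bigl(\frac{e^{i\psi_k}}{e^{i\theta_j}-e^{i\psi_k}}-\frac{e^{i\psi_n}}{e^{i\theta_j}-e^{i\psi_n}}\Bigr),
\end{equation*}
and the identity
\begin{equation*}
\frac{e^{i\psi_k}}{e^{i\theta_j}-e^{i\psi_k}}-\frac{e^{i\psi_n}}{e^{i\theta_j}-e^{i\psi_n}}=\frac{e^{i\theta_j}(e^{i\psi_k}-e^{i\psi_n})}{(e^{i\theta_j}-e^{i\psi_k})(e^{i\theta_j}-e^{i\psi_n})}
\end{equation*}
reduces the resulting determinant, via Cauchy's double alternant (\ref{9.1}), to the desired product form, exactly as in the chain (\ref{10.2})--(\ref{10.3}). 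Substituting back the residue expression for $w_j$ into the Dirichlet density $\prod_l w_l^{m_l-1}$ and tallying the powers of $|1-t|$ (which total $\sum_l (m_l-1)+(n-1)=N-1$) and of $|e^{i\theta_j}-e^{i\theta_p}|$, $|e^{i\theta_j}-e^{i\psi_p}|$ then yields (\ref{supp2}).

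The main obstacle I anticipate is bookkeeping in the last paragraph: all derivations must ultimately produce a real, positive density, so the complex phases coming from $i$, $e^{i\psi_k}$ and $e^{i\theta_j}$ in the Jacobian, together with the complex factors hidden in the residue formula for $w_j$, have to cancel to leave only the moduli that appear in (\ref{supp2}). Verifying this cancellation carefully --- together with checking the interlacing argument on the unit circle, which is topologically more delicate than its real counterpart because one must correctly identify $\theta_0:=\theta_n\bmod 2\pi$ --- is where most of the actual work lies.
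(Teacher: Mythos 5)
Your proposal follows essentially the same route as the paper: reduce to $AWBW^\dagger$ with $W$ Haar, factor the characteristic polynomial via (\ref{6.2}), read off the $m_l-1$ trivial eigenvalues, write the non-trivial ones as zeros of a rational function, take residues to express the Dirichlet-distributed weights, and change variables using a Cauchy determinant. Two points worth flagging.

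First, a minor but substantive slip in the interlacing argument: the function $\lambda\mapsto 1-(t-1)\sum_p e^{i\theta_p}w_p/(\lambda-e^{i\theta_p})$ is \emph{not} real-valued for $\lambda$ on the unit circle, so one cannot directly track its sign changes. The paper's device is to rewrite the secular equation as
$\cot(\phi/2)=\sum_{j=1}^n q_j\cot\bigl((\psi-\theta_j)/2\bigr)$,
which follows upon dividing $C_n(e^{i\psi})$ by $-\tfrac{i}{2}(t-1)$ and using $\lambda/(\lambda-\lambda_j)=\tfrac12-\tfrac{i}{2}\cot((\psi-\theta_j)/2)$. The right-hand side is manifestly real, monotone on each arc between consecutive $\theta_j$'s, and ranges over all of $\mathbb R$, which is what delivers (\ref{supp}). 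Without this rewriting your graphical argument has nothing real to graph, so the interlacing claim is unsupported as stated; filling it in requires exactly the cotangent identity above.

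Second, a welcome difference: for the Jacobian of $(w_1,\dots,w_{n-1})\mapsto(\psi_1,\dots,\psi_{n-1})$ the paper appeals to \cite[Lemma~2]{FR06}, whereas you compute it directly by logarithmic differentiation of the residue formula and reduction to the Cauchy double alternant, in close analogy with (\ref{10.2})--(\ref{10.3}). Your entries $\partial w_j/\partial\psi_k$ and the simplifying identity check out, so this gives a self-contained derivation of (\ref{r2}) that a reader can verify without the external reference. The final phase-cancellation bookkeeping you defer is indeed routine but necessary; it is implicit in the paper as well.
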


\begin{proof}
The matrix $UAU^\dagger V B V^\dagger$ has the same eigenvalues as $AWBW^\dagger$ where $W = U^\dagger V \in U(N)$ chosen
with Haar measure. Now $AWBW^\dagger = A (\mathbb I_N + (t-1) \hat{\mathbf w}  \hat{\mathbf w}^\dagger)$ where $\mathbf w$ denotes the first column of
$W$. For the characteristic polynomial of the latter, manipulation analogous to that used in the final two equalities of (\ref{6.1}) gives the factorised form
\begin{equation}\label{f1}
\det ( \lambda I - U) \Big ( t - (t-1) \lambda \sum_{j=1}^n {q_j \over \lambda - \lambda_j} \Big ),
\end{equation}
where $q_j = \sum_{s=1}^{m_l} | w_j^{(s)} |^2$ with $w_j^{(s)}$ denoting the components of the vector $\hat{\mathbf w} $ in the same rows as the
eigenvalue $e^{i \theta_j}$ (multiplicity $m_j$).

It follows immediately from (\ref{f1}) that the eigenvalues of $A$, $e^{i \theta_l}$, with multiplicity greater than 1 remain as eigenvalues
of the product matrix, but now with multiplicity $m_l-1$. It follows too that the remaining eigenvalues are given by the zeros of the second factor.
Writing $\lambda = e^{i \psi}$ and recalling $t = e^{i \phi}$ the implied equation can be written
$$
0 = \cot {\phi \over 2} - \sum_{j=1}^n q_j \cot {\psi - \theta_j \over 2}.
$$
Consideration of the graph of the right hand side of this equation implies the interlacing (\ref{supp}). Also, with
$S = AWBW^\dagger$ by taking the determinant we must have $\det S = \det A \det B$ which is the constraint (\ref{supp1}).

Denoting the second factor in (\ref{f1}) by $C_n(\lambda)$, and setting $\tilde{\lambda}_j = e^{i \psi_j}$ we observe that it permits the rational
function form
$$
C_n(\lambda) = {\prod_{j=1}^n (\lambda - \tilde{\lambda}_j) \over
\prod_{j=1}^n (\lambda - \lambda_j)}.
$$
Taking residues allows us to then deduce
\begin{equation}\label{r1}
-(t-1) \lambda_j q_j =
{\prod_{l=1}^n(\lambda_j - \tilde{\lambda}_l) \over
\prod_{l=1, l \ne j}^n(\lambda_j - \lambda_l) }
\qquad (j=1,\dots,n).
\end{equation}

Using the above, we can read off from the working of \cite[Lemma 2]{FR06} that the Jacobian $J$ for the
change of variables from $\{ q_j \}_{j=1,\dots,n-1} \cup \{t \}$ to $\{ \tilde{\lambda}_j \}_{j=1,\dots,n-1} \cup \{t \}$
is given by
\begin{equation}\label{r2}
J = | 1 - t|^{-(n-1)} \prod_{1 \le j < k \le n} \Big | {\tilde{\lambda}_k -  \tilde{\lambda}_j \over  \lambda_k - \lambda_j} \Big |.
\end{equation}
The probability density for $\{q_j\}$ is given by the Dirichlet distribution  (\ref{D}) with $w_j =  q_j$, $s_j = m_j$ ($j=1,\dots,n)$.
Substituting (\ref{r1}) for $q_j$ and using too (\ref{r2}), by changing variables in the corresponding probability
measure, wedged with $dt$, we read off (\ref{supp2}).

\end{proof}

\section{Some applications of the HCIZ integral }\label{S5}
\subsection{Derivation of (\ref{C5}) }
Zuber \cite{Zu18} has initiated a study of the eigenvalues of the random matrix sum (\ref{3.1}) based on a matrix integral due to Harish-Chandra 
\cite{HC57}, and Itzykson and Zuber \cite{IZ80}. Let the eigenvalues of the Hermitian matrix $X$ be denoted $x:=(x_1,\dots, x_n)$ and those of
the Hermitian matrix $Y$ be denoted
$y =:  (y_1,\dots, y_N)$.
This matrix integral, referred to as the HCIZ integral for short, then reads (see e.g.~\cite[Proposition 11.6.1]{forrester10})
\begin{equation}\label{HCIZ}
\int_U \exp ({\rm Tr} \, U^\dagger X U Y) \, [U^\dagger dU] = {\prod_{j=1}^n} \Gamma(j) \,
{\det [ e^{ x_j y_k} ]_{j,k=1}^n \over \Delta_n(x)  \Delta_n(y)}
\end{equation}
where $[U^\dagger dU]$ denotes the normalised Haar measure for $U(n)$, and for an array $u = (u_1,\dots, u_n)$,
$\Delta_n(u) := \prod_{1 \le j < k \le n} (u_k - u_j)$.
 In this section we will show how the PDF (\ref{C5}) can be derived making use of (\ref{HCIZ}). Use of the later to derive (\ref{C1}) can be
 found in \cite{Ol13}.

Let $X$ and $C$ be $n \times n$ Hermitian matrices $X = [x_{jk} ]_{j,k=1}^n$, $C = [c_{jk} ]_{j,k=1}^n$. Let $X$ be random with
distribution having PDF $f(X)$, and define the Fourier-Laplace transform
\begin{equation}\label{5.1}
\hat{f}_X(C) = \mathbb E_X [ e^{- {\rm Tr} \, CX} ].
\end{equation}
From this definition it is immediate that for $X$ and $Y$ independent
\begin{equation}\label{5.1a}
\hat{f}_{X + Y}(C) = \hat{f}_{X} (C)  \hat{f}_Y (C).
\end{equation}
Noting that
\begin{equation}
{\rm Tr} \, CX = \sum_{j=1}^n c_{jj}^{(r)}  x_{jj}^{(r)}  + 2 \sum_{1 \le j < k \le n} \Big (
c_{jk}^{(r)}  x_{jk}^{(r)} + c_{jk}^{(i)}  x_{jk}^{(i)}  \Big ),
\end{equation}
and writing $(d X) = \prod_{1 \le j \le k \le n} d x_{jk}^{(r)}  \prod_{1 \le j < k \le n} d x_{jk}^{(i)} $
(here the superscripts $(r)$ and $(i)$ denote the real and imaginary parts) (\ref{5.1}) can be rewritten
\begin{equation}\label{5.2}
\hat{f}_X(C) = \int f(X) e^{- \sum_{j=1}^n c_{jj}^{(r)} x_{jj}^{(r)}}
e^{- 2 \sum_{j < k}( c_{jk}^{(r)} x_{jk}^{(r)} +  c_{jk}^{(i)} x_{jk}^{(i)}) } (d X).
\end{equation}
It is assumed that $f$ decays fast enough that this integral converges. Making use of the usual multi-dimensional inverse
Fourier transform shows that (\ref{5.2}) can be inverted to give
\begin{equation}\label{5.3}
f(X) = {1 \over 2^n \pi^{n^2}} \int \hat{f}_X(iC) \exp( i {\rm Tr} \, XC) \, (d C).
\end{equation}

Suppose now that $\hat{f}_X(iC) = \hat{f}_X(iUCU^\dagger)$ for all $U \in U(N)$, and thus is a function of the eigenvalues
$c = (c_1,\dots, c_n)$ only, which is to be denoted by writing $\hat{f}_X(iC) = f_X(ic)$. Then (\ref{5.3}) is a function of the
eigenvalues of $X$ only and we write $f(X) = f(x)$, where $x = (x_1,\dots, x_n)$. In this setting it can be shown, by averaging over
$U$ using the HCIZ integral  (\ref{HCIZ}), that
(\ref{5.3}) reduces to (see e.g.~\cite[Eq.~(1.6)]{KR17})
\begin{equation}\label{5.3a}
f(x) = { (\pi i )^{-n(n-1)/2} \over (2 \pi)^n \Delta_n(x)}
\int_{\mathbb R^n} d c_1 \dots dc_n \, \hat{f}_X(ic)
\Delta_n(c) \prod_{j=1}^n e^{i x_j c_j}.
\end{equation}

Let $Z$ denote the random matrix sum (\ref{3.1}). Making use of (\ref{5.1a}) and then the HCIZ integral to evaluate
$\hat{f}_{U A U^\dagger}(C)$ and $\hat{f}_{V B V^\dagger}(C)$ gives that
\begin{equation}\label{5.3b}
\hat{f}_{Z}(C) = {\prod_{j=1}^n (\Gamma(j))^2 \over \Delta_n(-ia)  \Delta_n(-ib) }
{\det [ e^{-ia_j c_k} ]_{j,k=1}^n   \det [ e^{-ib_j c_k} ]_{j,k=1}^n \over (\Delta_n(c))^2}.
\end{equation}
Replace $X$ by $Z$ in (\ref{5.3a}) and substituting (\ref{5.3b}) with $C$ replaced by $i c$ gives us the PDF of $Z$.
However we seek not the PDF of $Z$ itself but rather the eigenvalue PDF. This can be read off from the former by
recalling that associated with the diagonalisation formula $Z = W^\dagger {\rm diag} \, (z_1,\dots, z_n)W $, where
$\{z_j\}$ are the eigenvalues and $W$ is the matrix of the corresponding eigenvectors, is the 
decomposition of measure
(see e.g.~\cite[Eq.~(1.27) with $\beta = 2$]{forrester10})
$$
(dZ) = (\Delta_n(z))^2 (d z) (W^\dagger d W).
$$
Since the PDF for $Z$ is dependent only on the eigenvalues $z$, we can integrate over $W$ using 
(see e.g.~~\cite[Eq.~(1.28) with $\beta = 2$]{forrester10})
$$
\int (W^\dagger d W) = {\pi^{n(n-1)/2}  \over \prod_{j=1}^n \Gamma(j+1)}.
$$
With $f(z)$ now denoting the eigenvalue PDF of (\ref{3.1}), we obtain
\begin{multline}\label{5.3c}
f(z) = {1 \over (2 \pi)^n} {\prod_{j=1}^n \Gamma(j)   \over i^{3n(n-1)/2} }{\Delta_n(z) \over \Delta_n(a) } \\
\times {1 \over n!} \int \det [ e^{-i a_j c_k} ]_{j,k=1}^n
 \det [ e^{-i b_j c_k} ]_{j,k=1}^n \prod_{l=1}^n e^{i z_l c_l} {1 \over {\Delta_n(b) \Delta_n(c)}} (dc).
 \end{multline}
 This is the result of Zuber \cite[Proposition 1]{Zu18}, obtained by following essentially the
 same steps.
 
 Our specific interest is in the case $b_1, b_2, \dots, b_{n-1} \to 0$ and $b_n = b$. In this limit
 \begin{equation}\label{5.4}
 {\det [ e^{-i b_j c_k} ]_{j,k=1}^n \over \Delta_n(b) } \to
 {(-i)^{(n-2)(n-1)/2} \over \prod_{j=1}^{n-1} \Gamma(j)} {1 \over b^{n-1}}
 \det \begin{bmatrix} [c_k^{j-1} ]_{j=1,\dots, n-1 \atop k=1,\dots, n} \\
 [e^{-ib c_k} ]_{k=1,\dots,n} \end{bmatrix},
 \end{equation}
 which follows by taking the limits successively; see also \cite{Fa15}. More explicitly, note that when it comes to taking $b_l \to 0$, the first $l-1$ rows of the determinant can be
 subtracted in appropriate multiples from row $l$ to reduce its leading term to the one proportional to $b_l^{l-1}$ in its Maclaurin
 expansion. The denominator at this stage consists of $1/((b_l b_{l+1} \cdots b_n)^{l-1} \Delta_{N-l+1}(\{ b_j \}_{j=l}^N))$, so
 the limit $b_l \to 0$ can now be taken immediately by operating on only row $l$ of the determinant.
 
 Consider the product of the factor in the integrand of (\ref{5.3c}) $1/\Delta_n(c)$
 times the determinant in (\ref{5.4}). We see upon making
 of Laplace expansion of the latter, then evaluating the cofactors as Vandermonde products that this quantity,
 which is a symmetric function of $\{c_j\}_{j=1}^n$, that this simplifies to read
 \begin{equation}\label{5.4a} 
 {1 \over \Delta_n(c) } \det \begin{bmatrix} [c_k^{j-1} ]_{j=1,\dots, n-1 \atop k=1,\dots, n} \\
 [e^{-iy c_k} ]_{k=1,\dots,n} \end{bmatrix} = (-1)^{n-1} \sum_{p=1}^n {e^{-i b c_p} \over \prod_{l=1 \atop l \ne p }^N (c_l - c_p)}.
  \end{equation}
  For the product of the other factors in the integrand, we can write
   \begin{equation}\label{5.4b} 
 \det [ e^{-i a_j c_k} ]_{j,k=1}^n    \prod_{l=1}^n e^{i z_l c_l} =
 \det [ e^{-i (a_j - z_k) c_k} ]_{j,k=1}^n.
 \end{equation}
 
 Multiplying together (\ref{5.4a}) and (\ref{5.4b}) we see, upon minor manipulation, that the integrand of (\ref{5.3c})
 in the limiting case of interest 
reduces down to 
  \begin{equation}\label{5.4c}  
  (-1)^{n-1} \sum_{p=1}^n e^{- i b c_p}
  \det \begin{bmatrix} \displaystyle{{e^{-i (a_j - z_k) c_k} \over (c_k - c_p)^{q_{k,p}}}} \end{bmatrix}_{j,k=1}^n, \qquad {\rm where} \: \:
 q_{k,p}  := \left \{ \begin{array}{ll} 1, & k \ne p \\
 0, & k = p. \end{array} \right.
  \end{equation} 
  Consider term $p$ in this sum. The dependence on each $c_l$, $(l \ne p)$ occurs soley in column $l$, so for all these
  variables, the integrations can be done column by column. For these we require
  $$
  {\rm PV} \int_{-\infty}^\infty {e^{-i (a_j - z_k) c} \over c - c_p} \, dc =   -\pi i e^{-i (a_j - z_k) c_p} {\rm sgn} \, (a_j - z_k),
  $$
 which follows by a residue computation. Hence, after simple manipulation of the determinant, and with the integration of each
 $c_p$ in the summand still remaining, we are left with
 $$
 (\pi i )^{n-1} \sum_{p=1}^n e^{- i c_p (b + \sum_{j=1}^n a_j - \sum_{j=1}^n z_j)}
 \det \Big [ \Big (  {\rm sgn} \, (a_j - z_k) \Big )^{q_{j,k}}  \Big ]_{j,k=1}^n.
 $$
 Integrating over $c_p$ is now immediate, showing that the above expression reduces to
   \begin{equation}\label{5.4d}  
   2 \pi^ni^{n-1} \delta \Big ( b + \sum_{j=1}^n a_j - \sum_{j=1}^n z_j \Big ) \sum_{p=1}^n  \det \Big [ \Big (  {\rm sgn} \, (a_j - z_k) \Big )^{q_{j,k}}  \Big ]_{j,k=1}^n.
 \end{equation} 
 Note that the delta function constraint is the requirement (\ref{C4}), with $\{ \mu_i \}$ relabelled $\{z_i\}$.
 
 Consider the determinant in (\ref{5.4d}). We can check that with the $z_j$'s ordered $z_1 > z_2 > \cdots > z_n$, if two of the $a_j$'s
 say $a_q$ and $a_{q'}$ should fall between two consecutive $z_j$'s, or outside of $z_1$ or $z_n$, then rows $q$ and $q'$ are
 identical, so the determinant vanishes. Considering too the requirement of the delta function, we must therefore have the ordering
   \begin{equation}\label{5.4e}  
  z_1 > a_1 > z_2  >  \cdots > z_n > a_n
  \end{equation} 
  which with $\{ \mu_i \}$ relabelled $\{z_i\}$ is (\ref{C3}). With this ordering we can check that only the $p=1$ term is non-zero, with the
  determinant therein equal to
  $$
  \det \begin{bmatrix} 1 &  1 &  1& \cdots & 1 \\
  1 & -1 &  1 & \cdots & 1 \\
  1 & -1 & -1 & \cdots & 1 \\
  \vdots & & & & \vdots \\
   1 & -1 & -1 & \cdots & -1 \end{bmatrix}
   = \det  \begin{bmatrix} 2 & 0 & 0 & \cdots & 0 \\
  2 & -2 & 0 & \cdots & 0 \\
  2 & -2 & -2 & \cdots & 0 \\
  \vdots & & & & \vdots \\
   1 & -2 & -2 & \cdots & -1 \end{bmatrix} = (-2)^{n-1}.
   $$
   Hence (\ref{5.4d}), which is the multiple integral in (\ref{5.3c}) in the limiting case of interest, has the evaluation
   $$
   (2 \pi )^n (-i)^{n-1}n! \delta \Big ( b + \sum_{j=1}^n a_j - \sum_{j=1}^n z_j \Big ),
   $$
   supported on (\ref{5.4e}).
   Substituting this in (\ref{5.3c}), together with the appropriate factors from (\ref{5.4a}), reclaims (\ref{C5}).
   
   An outstanding question along these lines is to develop a method based on matrix transforms
   to similarly reclaim Proposition \ref{P2}; see the works \cite{KK16a,KK16b} for recent results on
   transforms of random product matrices.
   
   \subsection{Distribution of the diagonal entries for $U_p A U_p^\dagger$}
   It is observed in \cite{Fa18, ZX17}, and in fact much earlier in \cite[Eqns.~(3)--(5)]{FK99}
   (see also \cite{Fa05, Fo11})
    that the HCIZ integral (\ref{HCIZ}) has the interpretation as
   the  Fourier-Laplace transform of the distribution of the diagonal entries of the random matrix
   $U B U^\dagger$. Choosing $A = {\rm diag} \, (a_1,\dots, a_p,0,\dots,0)$ corresponds to the 
    Fourier-Laplace transform of the distribution of the diagonal entries of the random matrix
   $U_p B U_p^\dagger$ where $U_p$ is the $p \times n$ matrix formed by the first $p$ rows of
   $U$. Such distributions first appeared in a more general context in the work of Heckman
   \cite{He82}, and are termed Heckman measures.
   
   Let us consider first the case $p=1$. The matrix $U_1 B U_1^\dagger$ is then a scalar quantity,
   corresponding to a particular random quadratic form.
   
   \begin{prop}
   Let $\mathbf z$ be a row vector chosen uniformly at random from the unit sphere in $\mathbb C^n$, and let
   $B$ be an Hermitian matrix with eigenvalues $\{ b_i \}_{i=1}^n$, ordered $b_1 < \cdots < b_n$. 
   Let $h_n(x;b) := (b - x)^{n-2} {\rm sgn} \, (b - x)$. The PDF for the
   distribution of the random
   quadratic form $\mathbf z B \mathbf z^\dagger$
   is supported on $(b_1, b_n)$  and is given by
   \begin{equation}\label{hq}
   {n - 1 \over 2} {1 \over \Delta_n(b)} \det
   \begin{bmatrix} 1 & 1 & \cdots & 1 \\
   b_1 & b_2 & \cdots & b_n \\
   \vdots & \vdots & \cdots & \vdots \\
   b_1^{n-2} & b_2^{n-2} & \cdots & b_n^{n-2} \\
   h_n(b_1,x) &  h_n(b_2,x) & \cdots & h_n(b_n,x)
   \end{bmatrix}.
   \end{equation}
   \end{prop}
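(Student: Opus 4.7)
The plan is to use unitary invariance to diagonalise $B$, re-express the quadratic form as a linear combination of uniform simplex coordinates, invert the associated Laplace transform to obtain a partial-fraction representation of the density, and finally convert that expression into the Vandermonde-plus-$\sgn$ determinant in (\ref{hq}). First I would write $B = V D V^\dagger$ with $D = \mathrm{diag}(b_1,\ldots,b_n)$; since the uniform distribution on the unit sphere is invariant under right-multiplication by $V$, one may replace $B$ by $D$. Then $\mathbf{z} B \mathbf{z}^\dagger = \sum_{j=1}^n b_j |z_j|^2$, and, exactly as used in Section \ref{S2.1}, the vector $(|z_1|^2,\ldots,|z_n|^2)$ has the Dirichlet distribution (\ref{D}) with all parameters equal to $1$, i.e.\ is uniform on the simplex.

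Next, the density of $X = \sum_{j=1}^n b_j |z_j|^2$ would be obtained via its Laplace transform. Either by iterated elementary integration over the simplex, or as the rank-one limit of the HCIZ integral (\ref{HCIZ}) in the spirit of Section \ref{S5}, one finds
\begin{equation*}
\mathbb{E}[e^{-tX}] = \frac{(n-1)!}{t^{n-1}} \sum_{j=1}^n \frac{e^{-tb_j}}{\prod_{k \ne j}(b_k - b_j)}.
\end{equation*}
Term-by-term inverse Laplace transformation via $\mathcal{L}^{-1}[e^{-ta}/t^{n-1}] = (x-a)_+^{n-2}/(n-2)!$ then yields the Heaviside form
\begin{equation*}
p(x) = (n-1) \sum_{j=1}^n \frac{(x-b_j)_+^{n-2}}{\prod_{k \ne j}(b_k - b_j)}.
\end{equation*}

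To recast this as (\ref{hq}), I would split $(x-b_j)_+^{n-2} = \tfrac{1}{2}(x-b_j)^{n-2}\bigl(1 + \sgn(x-b_j)\bigr)$. The constant contribution vanishes: applying Lagrange interpolation to the polynomial $y \mapsto (x-y)^{n-2}$, of degree $n-2 < n-1$, and reading off the coefficient of $y^{n-1}$ gives the identity $\sum_j (x-b_j)^{n-2}/\prod_{k \ne j}(b_k - b_j) = 0$. The remaining $\sgn$ piece is then converted via $(x-b_j)^{n-2}\sgn(x-b_j) = (-1)^{n-1} h_n(b_j,x)$, and is recognised as the Laplace expansion of (\ref{hq}) along its last row, using the standard Vandermonde identity
\begin{equation*}
\frac{\Delta_{n-1}(b_1,\ldots,\widehat{b_j},\ldots,b_n)}{\Delta_n(b)} = \frac{(-1)^{j-1}}{\prod_{k \ne j}(b_k - b_j)}
\end{equation*}
to evaluate each cofactor; the signs $(-1)^{n+j}\cdot(-1)^{j-1} = (-1)^{n-1}$ combine to cancel the factor $(-1)^{n-1}$ picked up above, and the factor $(n-1)/2$ matches the prefactor in (\ref{hq}).

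The main obstacle I anticipate is the third step: the sign bookkeeping linking the Heaviside form to $h_n$ depends on the parity of $n$, and the Lagrange-interpolation vanishing identity is crucial to remove the pure-polynomial piece, which is incompatible with the single-row $\sgn$-structure at the bottom of the determinant. Once these are handled correctly, matching coefficients between the partial-fraction form and the cofactor expansion of (\ref{hq}) is immediate, and the support $(b_1,b_n)$ follows from the fact that $X$ is a convex combination of the $b_j$.
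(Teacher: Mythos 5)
Your proof is correct, and it reaches the same formula by a route that differs from the paper's in the inversion step in a way worth noting. The paper obtains the Fourier transform of the quadratic form by taking the rank-one limit of the HCIZ integral (keeping the answer in determinantal form), and then inverts the Fourier transform entry-by-entry in the last row using the distributional integral $\frac{1}{2\pi}\int_{-\infty}^\infty e^{ia(b_j-x)}a^{-(n-1)}\,da = \frac{i^{n-1}}{2\Gamma(n-1)}(b_j-x)^{n-2}\sgn(b_j-x)$, interpreted in the principal-value sense; the $\sgn$ factor thus enters directly from the regularisation of the singularity of $a^{-(n-1)}$ at the origin. You instead expand the determinant into the partial-fraction form $\frac{(n-1)!}{t^{n-1}}\sum_j e^{-tb_j}/\prod_{k\ne j}(b_k-b_j)$ (which is the same cofactor expansion applied before rather than after inversion, and is just the Hermite--Genocchi divided-difference formula for the uniform-simplex Laplace transform), invert a genuine one-sided Laplace transform term by term to get the unambiguous truncated powers $(x-b_j)_+^{n-2}$, and then split each into a polynomial piece plus a $\sgn$ piece; the polynomial piece is killed globally by the Lagrange-interpolation identity $\sum_j (x-b_j)^{n-2}/\prod_{k\ne j}(b_j-b_k)=0$. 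That identity is exactly the algebraic shadow of the distributional subtlety in the paper's Fourier inversion (it encodes why the anti-derivative of order $n-1$ of a delta function contributes only an odd/$\sgn$ part to the density). Your route therefore buys a more elementary and self-contained argument, avoiding generalised functions entirely, at the cost of an extra step to reassemble the determinant; the sign bookkeeping you flag, $(-1)^{n+j}(-1)^{j-1}=(-1)^{n-1}$ cancelling against $(x-b_j)^{n-2}\sgn(x-b_j)=(-1)^{n-1}h_n(b_j,x)$, checks out. One small remark: in the paper's intermediate formula (\ref{L2}) the prefactor reads $n-1$ where it should be $\Gamma(n)=(n-1)!$ (as your Laplace-transform formula correctly has); the $(n-2)!$ produced by the distributional integral then reduces this to the stated $\tfrac{n-1}{2}$ in (\ref{hq}), so the final proposition is unaffected.
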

   
   \begin{proof}
   Any one row or column of a Haar distributed member of $U(n)$ is uniformly distributed on the complex unit
   sphere in $\mathbb C^n$; see e.g.~\cite{DF17} and references therein. Hence with $U_1$ defined as in the
   text above the statement of the proposition, $U_1 B U_1^\dagger \mathop{=}\limits^{\rm d} \mathbf z B \mathbf z^\dagger$.
   Furthermore, with $A = {\rm diag} \, (ia,0,\dots,0)$ we see that ${\rm Tr} \, A U B U^\dagger = i a U_1 B U_1^\dagger$,
   so in the limit $a_1,a_2,\dots, a_{n-1} \to 0$ with $a_n = i a$ the LHS of the HCIZ integral (\ref{HCIZ}) can be written
   \begin{equation}\label{L1}
   \int_{|| \mathbf z || = 1}  e^{ i a \mathbf z B \mathbf z^\dagger} \, (d \mathbf z).
   \end{equation}
   Taking the limit on the RHS gives
   \begin{equation}\label{L2}
   {n - 1 \over (ia)^{n-1}} {1 \over \Delta_n(b)} \det
   \begin{bmatrix} 
   1 & 1 & \cdots & 1 \\
   b_1 & b_2 & \cdots & b_n \\
   \vdots & \vdots & \cdots & \vdots \\
   b_1^{n-2} & b_2^{n-2} & \cdots & b_n^{n-2} \\
   e^{i a b_1} & e^{i a b_2} & \cdots & e^{i a b_n}
   \end{bmatrix}.
  \end{equation}
  
  The PDF is obtained by multiplying (\ref{L2}) by ${1 \over 2 \pi} e^{-i a  x}$ and integrating over $a$. For the latter task,
  we observe that the only dependence on $a$ in the determinant is in the final row, so we can effectively integrate this
  row. However, the integrals must then be considered as  generalised functions  due to the singularity at the
  origin (alternatively the factor $1/a^{n-1}$ can be replaced by $1/(a + i \delta)^{n-1}$, and the limit $\delta \to 0^+$ be
  taken at the end). Adapting the former viewpoint (this was done is a similar context in the recent work \cite{ZX17}),
  and thus using the generalised integral
  \begin{equation}
  {1 \over 2 \pi} \int_{-\infty}^\infty {e^{i a (b_j - x)} \over a^{n-1}} \, da =
   {i^{n-1} \over 2} {(b_j - x)^{n-2} \over \Gamma(n-1)} {\rm sgn} \, (b_j - x)
  \end{equation}
   gives (\ref{hq}). 
   
  For $x$ outside the interval $(b_1, b_n)$ the $h_n(x;b)$ in the last row simplify to
  $h_n(x;b) = (b - x)^{n-2}$ (after possibly removing an overall sign from the row).
  The determinant can then be seen to vanish, so the support is restricted to $(b_1, b_n)$ in
  keeping with the definition of the quadratic form.
    \end{proof}  
    
    The PDF (\ref{hq}) is a piecewise polynomial of degree $n-2$ in $x$. Such a simple structure is to be
    contrasted with the PDF of the random quadratic form  $\mathbf x B \mathbf x^\dagger$, where
    $\mathbf x$ is a real random vector sampled uniformly at random from the sphere in $\mathbb R^n$
    \cite{PC98,KJ18}, which is a far more complicated function of $x$.
    
    From the original work \cite{He82} the PDF for the distribution of the diagonal entries of $U B U^\dagger$ is
    known as a particular $\binom{n}{2}$-fold convolution. For small $n$ more explicit calculations are also
    possible. For example, with $n=3$, taking the inverse transform of the HCIZ integral we find the PDF
 \begin{multline}
 {12 \over \Delta_3(\beta)} \delta \Big ( \sum_{i=1}^3 (b_i - x_i) \Big ) \bigg (
 (b_2 - b_3) \chi_{b_2 < x_3 < x_2 < x_1 < b_1}      \\
+ (x_3 - b_3)  \chi_{b_2 < x_2 < x_1 < b_1}  \chi_{b_3 < x_3  < b_3} +
 (b_1 - x_1)  \chi_{b_2 <  x_1 < b_1}  \chi_{b_3 < x_3 < x_2 < b_2} \\
 +  (b_1 - b_2) \chi_{b_3 < x_3 < x_2 < x_1 < b_2} \bigg ), 
 \end{multline}
  where we have ordered $  b_3 < b_2 < b_1$ and similarly $ x_3 < x_2 < x_1$. Here $\delta (u)$ denotes the Dirac
  delta function as in (\ref{5.4d}), while $\chi_A = 1$ if $A$ is true, and zero otherwise.
  
  There is a well studied Gaussian version of the above diagonal entries problem, which in the case of
  complex entries has attracted attention for its application to wireless communications
  \cite{HPBMP06,MPEW11}. Thus let $\Sigma$ be a positive
  definite $p \times p$ matrix and $G_{p \times n}$ be a standard Gaussian matrix. The $p \times p$
  matrix $X = \Sigma^{1/2} G G^T \Sigma^{1/2}$ is termed a correlated Wishart matrix. It is straightforward to show
  that the Fourier-Laplace transform of the  distribution of the  diagonal entries of $X$ is equal to
  $$
  \det ( \mathbb I_p - i \Sigma A)^{-\beta n /2}
  $$
  where $\beta = 1$ (2) in the case of real (complex) entries, and $A = {\rm diag} \, (a_1,a_2,\dots, a_p)$.
  For general $\Sigma$, $p,n$  there is no known structured formulae for the inverse transform,
  except in the case $p=2$ when the distribution can be expressed in terms of a Bessel function.
  The reference \cite{HPBMP06} gives this formula in the context of a study of the complexities faced
  in analysing the case $p=3$.

 \section{Concluding remarks}\label{S6}
 
 Unitary matrices diagonalise complex Hermitian matrices, while real orthogonal matrices diagonalise real symmetric matrices.
According to the more general Lie algebraic view of \cite{DRW93} it is natural to consider the random matrix sum (\ref{3.1}) in other
circumstances which share an analogous relation between $A$ and $U$, $B$ and $V$. For example, suppose $A$ (and also $B$) is a real anti-symmetric
matrix. It is well known (see e.g.~\cite{Hu63}) that for $n$ even ($n=2N$ say) there exists an element of $O(2N)$ such that conjugation by
this matrix puts $A$ into the block diagonal form
\begin{equation}\label{3.1a}
{\rm diag} \, \bigg ( \begin{bmatrix} 0 & a_1 \\ - a_1 & 0 \end{bmatrix}, \dots,  \begin{bmatrix} 0 & a_N \\ - a_N & 0 \end{bmatrix} \bigg ).
\end{equation}
The same holds true for $n$ odd ($n=2N+1$ say) with the block diagonal form now reading
\begin{equation}\label{3.1b}
{\rm diag} \, \bigg ( \begin{bmatrix} 0 & a_1 \\ - a_1 & 0 \end{bmatrix}, \dots,  \begin{bmatrix} 0 & a_N \\ - a_N & 0 \end{bmatrix}, [0] \bigg ).
\end{equation}
An integral formula for the eigenvalue PDF of (\ref{3.1}), which makes use of Harish-Chandra's
\cite{HC57} extension of (\ref{HCIZ}) to such settings, has been given in \cite{Zu18}, and the $N=2$ case has been made explicit.

It is known from the works \cite{De10,FILZ17,KFI17} how to compute an explicit eigenvalue PDF for the randomised sum
$A + G^T B G$, where the pair $(A,B)$ are of the form (\ref{3.1a}), or (\ref{3.1b}), with the further requirement that
$N$ is rank 2 (the smallest rank compatible with the structures), and $G$ is a standard real Gaussian matrix. 
Replacing $G$ by a Haar distributed real orthogonal matrix,
we have
found that specialisation of the integral formulas in  \cite{Zu18} to this low rank perturbation setting does not lead to
a simple structured formula analogous to (\ref{C5}) or (\ref{D1B}). Rather the fact that the perturbation is rank 2 leads
to much more complicated structures involving a vast number of terms, conveying little information as to the salient analytic
features (the support, singularities near the boundary etc.).

\section*{Acknowledgements}
This work is part of a research program supported by the Australian Research Council (ARC) through the ARC Centre of Excellence for Mathematical and Statistical frontiers (ACEMS). PJF also acknowledges partial support from ARC grant DP170102028, and JZ acknowledges the support of a Melbourne postgraduate award, and an ACEMS top up scholarship.
We thank J.~Faraut for helpful remarks.

%

\providecommand{\bysame}{\leavevmode\hbox to3em{\hrulefill}\thinspace}
\providecommand{\MR}{\relax\ifhmode\unskip\space\fi MR }
\providecommand{\MRhref}[2]{%
  \href{http://www.ams.org/mathscinet-getitem?mr=#1}{#2}
}
\providecommand{\href}[2]{#2}

\end{document}